\documentclass[sigconf,9pt]{acmart}

\settopmatter{printacmref=false} 
\pagestyle{plain} 
\renewcommand\footnotetextcopyrightpermission[1]{} 
\copyrightyear{2024}
\acmYear{2024}
\setcopyright{acmcopyright}\acmConference[ICCAD'24 (to appear)]{Proceedings of the ACM/IEEE International Conference on Computer-Aided Design}{October 27--31, 2024}{New Jersey, USA}
\acmBooktitle{Proceedings of the ACM/IEEE International Conference on Computer-Aided Design (ICCAD), October 27--31, 2024, New Jersey, USA}
\acmPrice{}
\acmDOI{}
\acmISBN{}

\usepackage{graphicx} 

\usepackage{cleveref}
\Crefformat{figure}{#2Figure~#1#3}

\usepackage{xcolor, colortbl}

\usepackage{physics}

\usepackage{tikz}

\usepackage[]{qcircuit}

\newcommand{\libsmall}{\{\mathcal{U}(2), \text{CNOT}\}}

\usepackage{bbm}

\usepackage{amsthm}
\newtheorem{theorem}{Theorem}[section]

\newtheorem{example}[theorem]{Example}

\usepackage{caption}
\captionsetup{font=small}

\usepackage{subcaption}

\usepackage{multicol}
\usepackage{multirow}

\usepackage[titlenumbered,ruled,linesnumbered,noend]{algorithm2e}

\makeatletter
\renewcommand*\env@matrix[1][\arraystretch]{%
  \edef\arraystretch{#1}%
  \hskip -\arraycolsep
  \let\@ifnextchar\new@ifnextchar
  \array{*\c@MaxMatrixCols c}}
\makeatother

\newcommand{\tocheck}[1]{\textcolor{black}{#1}}

\usepackage{soul}

\title{Quantum State Preparation Circuit Optimization Exploiting Don't Cares}

\author{Hanyu Wang, Daniel Bochen Tan, Jason Cong}
\affiliation{%
  \institution{University of California, Los Angeles}
  \city{}
  \country{}
}
\email{{hanyuwang, bctan, cong}@cs.ucla.edu}

\begin{document}
\begin{abstract}
    Quantum state preparation initializes the quantum registers and is essential for running quantum algorithms.
    Designing state preparation circuits that entangle qubits efficiently with fewer two-qubit gates enhances accuracy and alleviates coupling constraints on devices.
    Existing methods synthesize an initial circuit and leverage compilers to reduce the circuit's gate count while preserving the unitary equivalency.
    In this study, we identify numerous conditions within the quantum circuit where breaking local unitary equivalences does not alter the overall outcome of the state preparation (i.e., don't cares).
    We introduce a peephole optimization algorithm that identifies such unitaries for replacement in the original circuit.
    Exploiting these don't care conditions, our algorithm achieves a \tocheck{36\%} reduction in the number of two-qubit gates compared to prior methods.
\end{abstract}

\maketitle

\section{Introduction}
\emph{Quantum states preparation}~(QSP) is indispensable in quantum computing for initializing the state of quantum registers. The initial state to prepare is determined by the specific quantum algorithm and application. Examples of such states include GHZ states, which are instrumental in entanglement experiments~\cite{greenberger1989going}; W states~\cite{dur2000three} and Dicke states~\cite{dicke1954coherence}, which are useful in quantum metrology and act as generators of complex symmetric states~\cite{toth2012multipartite}; VBS states~\cite{murta2023preparing}, valuable for modeling interacting~\cite{read1989valence} quantum spin models. Additionally, certain algorithms encode the problem inputs directly into the initial states~\cite{ashhab2022quantum}. Consequently, developing automated algorithms to design the state preparation circuit is necessary. 

Reducing the gate count, particularly the number of two-qubit gates in the circuit, is essential for improving the performance of noisy intermediate-scale quantum computing. To generate the initial QSP circuit, Boolean methods are proposed utilizing decision diagrams to prepare general $n$-qubit states employing $\mathcal{O}(2^n)$ two-qubit gates~\cite{araujo2021divide, mozafari2019preparation, niemann2016logic}. Besides, specialized algorithms for sparse state preparation are developed to prepare $n$-qubit states with $m$ nonzero amplitudes using $\mathcal{O}(mn)$ two-qubit gates~\cite{gleinig2021efficient, mozafari2022efficient, malvetti2021quantum}. These methods efficiently search for a feasible circuit by decomposing the quantum state using divide-and-conquer techniques. 

To reduce the gate count and optimize the initial circuit, synthesis algorithms partition the circuits into blocks with manageable unitary matrix dimensions~\cite{wu2020qgo} and apply \emph{unitary synthesis} approaches that search for replacement circuits with fewer gates to implement the given unitary~\cite{davis2019heuristics}. These unitary synthesis algorithms can effectively simplify the unitary matrices of systems up to six qubits~\cite{smith2023leap, davis2020towards, younis2020qfast}. Consequently, these algorithms are often employed as peephole optimizations within larger workflows that utilize iterative methods or design space exploration techniques to enhance scalability~\cite{kissinger2019pyzx, staudacher2021optimization}.

Due to the inherent complexity, existing quantum circuit optimization algorithms assume the unitary is fixed for a given circuit, which is not necessarily true in many quantum applications. In the quantum state preparation problem, for instance, all qubits are initialized to the ground state. As a result, only the first column of the unitary matrix corresponding to the ground state affects the circuit outcome, while the entries in all other columns are flexible. Szasz et al. generalize the flexibilities in applications as a multi-set preparation problem and point out various flexible unitaries in Hamiltonian simulation circuits, preparation of general quantum channels, and circuits with ancillary qubits~\cite{szasz2023numerical}. 
Exploiting these flexibilities can significantly improve the circuit performance, as the algorithms are allowed to replace the unitary with a more promising one with a smaller circuit size. However, encoding this flexibility in the optimization algorithm results in an exponentially growing complexity and, thus, does not apply to larger entangled systems~\cite{ashhab2023quantum}. 

In this paper, we propose a novel scalable algorithm to optimize quantum state preparation circuits. Our method formulates the flexibility in modifying unitary operators as don't cares and allows for simpler circuit configurations. Given an initial QSP circuit, we first partition the circuit into \emph{single-target segments} whose functionality can be efficiently expressed. Then, we propagate and derive the don't-care conditions inside these segments. Finally, our resynthesis algorithm identifies and utilizes don't-cares to simplify the segments' implementation. Experimental results show that our approach enhances the optimization of the QSP circuit and reduces the CNOT count by \tocheck{36\%} compared with existing algorithms without employing don't cares. 

In the rest of the paper, we present background and related work in \Cref{sec:background}. Then, we give an example of don't care-based circuit optimization in \Cref{sec:motivating-example} to motivate our work. We illustrate our methodology in \Cref{sec:methodology}. In \Cref{sec:evaluation}, we demonstrate experimental results and evaluate our method.
\begin{figure*}[t]
    \centering
    \begin{subfigure}[b]{.99\linewidth}
    \centering
    \mbox{
    \small
    \Qcircuit @C=1em @R=.3em {
            \mbox{$\mathcal{W}_1$} && \mbox{$\ket{\psi_1}$} &\mbox{$\mathcal{W}_2$}&&& \mbox{$\ket{\psi_2}$} &&&\mbox{$\mathcal{W}_3$}&&&& \mbox{$\ket{\psi_3}$} & \mbox{$\mathcal{W}_4$} & \mbox{$\ket{\psi_4}$} &&\mbox{$\mathcal{W}_5$}&& \mbox{$\ket{\psi_5}$} & \\
        \lstick{q_0} & \gate{\parbox{0.6cm}{\centering \footnotesize $R_y$\\$\pi/2$}} \ar@{--}[]+<1.7em,1em>;[d]+<1.7em,-3em> & \qw & \ctrl{1} & \qw & \ctrl{1} \ar@{--}[]+<.7em,1em>;[d]+<.7em,-3em> & \qw & \ctrl{2} & \qw & \qw & \qw & \ctrl{2} & \qw & \qw \ar@{--}[]+<.7em,1em>;[d]+<.7em,-3em> & \qw \ar@{--}[]+<1.7em,1em>;[d]+<1.7em,-3em> & \qw & \qw & \qw & \qw \ar@{--}[]+<.7em,1em>;[d]+<.7em,-3em> & \qw \\
        \lstick{q_1} & \qw & \gate{\parbox{0.6cm}{\centering \footnotesize $R_y$\\$\pi/4$}} & \targ & \gate{\parbox{0.6cm}{\centering \footnotesize $R_y$\\$-\pi/4$}} & \targ & \qw & \qw & \qw & \ctrl{1} & \qw & \qw & \qw & \ctrl{1} & \gate{\parbox{0.6cm}{\centering \footnotesize $R_y$\\$\pi/2$}} & \qw & \ctrl{1} & \qw & \ctrl{1} & \qw \\
        \lstick{q_2} & \qw & \qw & \qw & \qw & \qw & \gate{\parbox{0.6cm}{\centering \footnotesize $R_y$\\$\pi/4$}} & \targ & \gate{\parbox{0.6cm}{\centering \footnotesize $R_y$\\$-\pi/4$}} & \targ & \gate{\parbox{0.6cm}{\centering \footnotesize $R_y$\\$-\pi/4$}} & \targ & \gate{\parbox{0.6cm}{\centering \footnotesize $R_y$\\$\pi/4$}} & \targ & \qw & \gate{\parbox{0.6cm}{\centering \footnotesize $R_y$\\$\pi/4$}} & \targ & \gate{\parbox{0.6cm}{\centering \footnotesize $R_y$\\$-\pi/4$}} & \targ & \qw
            }
        }
        \caption{Quantum circuit decomposed using conventional quantum compilation algorithms to prepare $\ket{\psi}$.}\label{fig:optimized-conventional}
    \end{subfigure}
    \begin{subfigure}[t]{.6\linewidth}
        \centering
        \mbox{
        \small
            \Qcircuit @C=1em @R=.3em {
                \mbox{$\mathcal{W}_1$} && \mbox{$\ket{\psi_1}$} &
                \mbox{$\mathcal{W}_2$} && \mbox{$\ket{\psi_2}$}
                \mbox{$\mathcal{W}_3$} && \mbox{$\ket{\psi_3}$} \mbox{$\mathcal{W}_4$} \mbox{$\ket{\psi_4}$} &&
                \mbox{$\mathcal{W}_5$} && \mbox{$\ket{\psi_5}$} & \\
        \lstick{q_0} & \gate{\parbox{0.6cm}{\centering \footnotesize $R_y$\\$\pi/2$}} \ar@{--}[]+<1.7em,1em>;[d]+<1.7em,-3em> & \qw & \ctrlo{1} & \qw \ar@{--}[]+<1.7em,1em>;[d]+<1.7em,-3em> & \ctrlo{2} & \qw \ar@{--}[]+<.7em,1em>;[d]+<.7em,-3em> & \qw \ar@{--}[]+<1.7em,1em>;[d]+<1.7em,-3em> & \qw & \qw & \qw & \qw \ar@{--}[]+<.7em,1em>;[d]+<.7em,-3em> & \qw \\
        \lstick{q_1} & \qw & \gate{\parbox{0.6cm}{\centering \footnotesize $R_y$\\$3\pi/4$}} & \targ & \gate{\parbox{0.6cm}{\centering \footnotesize $R_y$\\$-\pi/4$}} & \qw & \ctrlo{1} & \gate{\parbox{0.6cm}{\centering \footnotesize $R_y$\\$\pi/2$}} & \qw & \ctrl{1} & \qw & \ctrl{1} & \qw \\
        \lstick{q_2} & \qw & \qw & \qw & \qw & \targ & \targ & \qw & \gate{\parbox{0.6cm}{\centering \footnotesize $R_y$\\$\pi/4$}} & \targ & \gate{\parbox{0.6cm}{\centering \footnotesize $R_y$\\$-\pi/4$}} & \targ & \qw
            }
        }
        \centering
        \caption{Quantum circuit optimized exploiting don't-cares within the five \\ segments, $\mathcal{W}_1$, ..., $\mathcal{W}_5$ to prepare $\ket{\psi}$.}\label{fig:satisfiability-dc}
    \end{subfigure}
    \begin{subfigure}[t]{.36\linewidth}
        \centering
        \mbox{
        \small
            \Qcircuit @C=.7em @R=.3em {
                && \mbox{$\ket{\psi_1}$} & \mbox{$\ket{\psi'_2}$} &&&&& \mbox{$\ket{\psi_5}$} & \\
        \lstick{q_0} & \gate{\parbox{0.6cm}{\centering \footnotesize $R_y$\\$\pi/2$}} \ar@{--}[]+<1.7em,1em>;[d]+<1.7em,-3em> & \qw \ar@{--}[]+<1.7em,1em>;[d]+<1.7em,-3em> & \qw & \qw & \qw & \ctrl{2} & \qw \ar@{--}[]+<1.7em,1em>;[d]+<1.7em,-3em> & \qw \\
        \lstick{q_1} & \qw & \gate{\parbox{0.6cm}{\centering \footnotesize $R_y$\\$\pi/2$}} & \qw & \ctrl{1} & \qw & \qw & \qw & \qw \\
        \lstick{q_2} & \qw & \qw & \gate{\parbox{0.6cm}{\centering \footnotesize $R_y$\\$\pi/4$}} & \targ & \gate{\parbox{0.6cm}{\centering \footnotesize $R_y$\\$-\pi/2$}} & \targ & \gate{\parbox{0.6cm}{\centering \footnotesize $R_y$\\$\pi/4$}} & \qw
            }
        }
        \caption{Quantum circuit optimized exploiting all don't-cares to prepare $\ket{\psi}$.}\label{fig:observability-dc}
    \end{subfigure}

    \caption{Three quantum circuits with different numbers of gates to prepare the same state $\ket{\psi}$ in \Cref{eqn:target-state}. This example establishes that the gate counts can be reduced by using don't cares in the quantum circuit.}\label{fig:motivating-example}
\end{figure*}
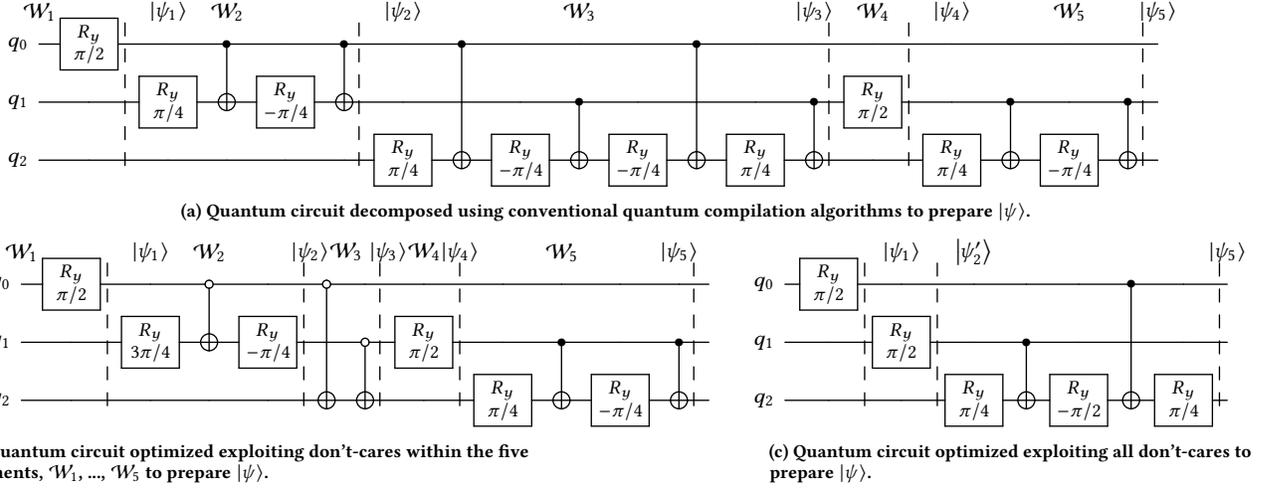

\section{Background}\label{sec:background}
In this section, we provide background for quantum states, gates, and circuits. For clarity and space constraints, we refer readers to established sources for formal definitions of notations~\cite{nielsen2010quantum} and quantum gates~\cite{barenco1995elementary}.

\subsection{Quantum States and Quantum Gates}
We express the $n$-qubit \emph{quantum state} as a linear combination of $2^n$ orthonormal basis vectors.
\begin{equation*}
    \ket{\psi} = \sum_{x\in \{0,1\}^n} c_x \ket{x},\,\text{and}\, \sum_{x\in \{0,1\}^n} |c_x|^2 = 1,
\end{equation*}
where $\ket{x}\in\{0,1\}^n$ are the \emph{basis states}, $c_x\in\mathbb{C}$ are \emph{amplitudes} that indicate the probability of observing $\ket{x}$ after a measurement.

\emph{Quantum gates}, or \emph{operators}, denoted by $U$, are unitary matrices representing transitions between quantum states. $\mathcal{U}(2^n)$ stands for the set of all $n$-qubit gates. Y rotations, $R_y$, are single-qubit unitaries $U\in\mathcal{U}(2)$ that redistribute the amplitude between $\ket{0}$ and $\ket{1}$. Z rotations, $R_z$, introduce the relative phase shift between $\ket{0}$ and $\ket{1}$. 
\begin{equation*}\label{eqn:u-ry}
    R_y(\theta) = \begin{pmatrix}
        \cos\frac{\theta}{2} & -\sin\frac{\theta}{2} \\
        \sin\frac{\theta}{2} & \cos\frac{\theta}{2}
    \end{pmatrix}\,,\, 
    R_z(\theta) = \begin{pmatrix}
        \text{exp}(-\frac{i\theta}{2}) & 0 \\
        0 & \text{exp}(\frac{i\theta}{2})
    \end{pmatrix},
\end{equation*}
where $\theta$ is the \emph{rotation angle}.

Pauli-X, $\sigma_x$, is a single-qubit operator that maps $\alpha\ket{0} + \beta\ket{1}$ to $\beta\ket{0} + \alpha\ket{1}$. CNOT is a two-qubit gate that transitions 
$\alpha\ket{00}\!+\!\beta\ket{01}\!+\!\gamma\ket{10}\!+\!\delta\ket{11}$ to 
$\alpha\ket{00}\!+\!\beta\ket{01}\!+\!\gamma\ket{11}\!+\!\delta\ket{10}$. 
\begin{equation*}\label{eqn:u-x}
    \sigma_x = \begin{pmatrix}
        0 & 1 \\
        1 & 0 
    \end{pmatrix},\ 
    \text{CNOT} = \mathbbm{1} \oplus \sigma_x = \begin{pmatrix}
        1 & 0 & 0 & 0 \\
        0 & 1 & 0 & 0 \\
        0 & 0 & 0 & 1 \\
        0 & 0 & 1 & 0 \\
    \end{pmatrix}.
\end{equation*}

All operators in $\mathcal{U}(2^n)$ can be decomposed into gates in the set $\{\text{CNOT},\  \mathcal{U}(2)\}$~\cite{barenco1995elementary}. A single-qubit gate $\mathcal{U}(2)$ can be further decomposed into $R_z$ and $R_y$ deterministically~\tocheck{\cite{krol2022efficient}}. 

Two operators, $U_1$ and $U_2$, \emph{commute} if exchanging their order does not affect the functionality, i.e., $U_1U_2 = U_2U_1$.

\subsection{Quantum State Preparation}
Given a state $\ket{\psi}$, the \emph{quantum state preparation}~(QSP) finds a quantum circuit comprising $l$ gates $U_1, U_2, ... U_l$ such that these gates transition the ground state $\ket{0}$ to the target state $\ket{\psi}$, i.e., $\ket{\psi} = U_l...U_2U_1\ket{0}$. For \emph{noisy intermediate-scale quantum}~(NISQ) computers, CNOTs produce far more noise than single-qubit gates. Besides, CNOTs require the two involved qubits to be coupled and generate connectivity constraints in the layout synthesis~\cite{iccad20-tan-cong-optimal-layout-synthesis}. Therefore, the objective of the QSP circuit optimization in this paper is to minimize the number of CNOTs in the circuit after being decomposed into gates in $\{\text{CNOT}, \mathcal{U}(2)\}$.

Additionally, this paper studies states with real amplitudes due to their predominance in quantum algorithms. To prepare arbitrary states with complex amplitudes, our method adapts through the controlled phase gates~\cite{Amy_2018}.


\section{Motivating Example}\label{sec:motivating-example}
In this section, we illustrate the limitations of applying conventional quantum compilation algorithms to optimize the quantum state preparation circuit. Then, we demonstrate the advantages of our optimization approach, which leverages don't-care conditions and resynthesis subcircuits to achieve a significant reduction in the CNOT count and the error rate.


Consider the target state $\ket{\psi}$. Each index encodes the value of three qubits in the order of $\ket{q_0q_1q_2}$. 
\begin{equation}\label{eqn:target-state}
\begin{array}{rl}
    \ket{\psi} = &\sqrt{\frac{2}{8}}\ket{000}-\sqrt{\frac{1}{8}}\ket{100}+\sqrt{\frac{1}{8}}\ket{010}+ \\
    &\sqrt{\frac{1}{8}}\ket{101}+\sqrt{\frac{1}{8}}\ket{011}+\sqrt{\frac{2}{8}}\ket{111}
\end{array}
\end{equation}

\Cref{fig:optimized-conventional} depicts a feasible initial QSP circuit of $\ket{\psi}$. This initial circuit comprises five segments, $\mathcal{W}_1$, ..., $\mathcal{W}_5$. These segments are generated by decomposing unitary operators, including two single-qubit Y rotations ($\mathcal{W}_1$ and $\mathcal{W}_4$), two single-controlled Y rotations ($\mathcal{W}_2$ and $\mathcal{W}_5$), and one double-controlled Y rotation ($\mathcal{W}_3$). For a rotation with $n$ control qubits, the configuration requires $2^n$ CNOTs after mapping to $\libsmall$~\cite{krol2022efficient}. Therefore, The final circuit needs $10$ single-qubit gates and $8$ CNOT gates to prepare $\ket{\psi}$. 
\begin{table}[t]
    \centering
    \caption{Intermediate quantum states in the example.}\label{tab:intermediate-states}
\vspace{-3mm}
    \begin{tabular}{ll}
    \hline
        $\ket{\psi_1}=$ & $\sqrt{\frac{1}{2}}\ket{000}+\sqrt{\frac{1}{2}}\ket{100}$\\
        $\ket{\psi_2}=$ & $\sqrt{\frac{2}{4}}\ket{000}+\sqrt{\frac{1}{4}}\ket{100}+\sqrt{\frac{1}{4}}\ket{110}$\\
        $\ket{\psi'_2}=$ & $\sqrt{\frac{1}{4}}\ket{000}+\sqrt{\frac{1}{4}}\ket{010}+\sqrt{\frac{1}{4}}\ket{100}+\sqrt{\frac{1}{4}}\ket{110}$\\
        $\ket{\psi_3}=$ & $\sqrt{\frac{2}{4}}\ket{000}+\sqrt{\frac{1}{4}}\ket{101}+\sqrt{\frac{1}{4}}\ket{110}$\\
        $\ket{\psi_4}=$  & $\sqrt{\frac{2}{8}}\ket{000}-\sqrt{\frac{1}{8}}\ket{100}+\sqrt{\frac{1}{8}}\ket{111}+\sqrt{\frac{1}{8}}\ket{101}+$ \\
        &$\sqrt{\frac{1}{8}}\ket{110}+\sqrt{\frac{2}{8}}\ket{010}$ \\
        $\ket{\psi_5}=$  & $\sqrt{\frac{2}{8}}\ket{000}-\sqrt{\frac{1}{8}}\ket{100}+\sqrt{\frac{1}{8}}\ket{010}+\sqrt{\frac{1}{8}}\ket{101}+$ \\
        &$\sqrt{\frac{1}{8}}\ket{011}+\sqrt{\frac{2}{8}}\ket{111}$ \\
    \hline
    \end{tabular}
\end{table}

Existing optimization algorithms rigidly preserve the untary equivalency of the quantum circuits~\cite{Qiskit, younis2020qfast, smith2023leap}. Let $U$ and $U'$ be the unitaries of the initial and optimized circuit. These algorithms assert the optimized circuit must behave the same as the initial circuit for all possible states in the three-qubit system, i.e.,
\begin{equation}\label{eqn:unitary-without-dc}
    U\varphi = U'\varphi\,,\, \forall \varphi\in\mathcal{H}^{\otimes 3}.
\end{equation}

This constraint is over-conservative for QSP circuit optimization and hinders conventional algorithms from finding more effective circuits with fewer gates. \Cref{fig:satisfiability-dc} displays a smaller circuit that prepares $\ket{\psi}$ using $5$ CNOT gates. Let $U_{w_i}$ and $U'_{{w_i}}$ be the unitaries of the $i$th initial and optimized segments $\mathcal{W}_i$, respectively. Although the unitaries of their subcircuits are different, e.g., $U_{w_2} \neq U'_{w_2}$, the outcome states are the same, i.e., $U_{w_2}\ket{\psi_1} = U'_{w_2}\ket{\psi_1} = \ket{\psi_{2}}.$ It can be observed by expressing the matrix representation in terms of the subspace generated by $q_0$ and $q_1$ ($q_2$ is separable in $\mathcal{W}_2$'s state transition). 
\begin{equation*}
\setlength{\arraycolsep}{3pt}
    U_{w_2} = \begin{pmatrix}
        1 & 0 & 0 & 0 \\
        0 & 1 & 0 & 0 \\
        0 & 0 & 0.7 & -0.7 \\
        0 & 0 & 0.7 & 0.7 \\
    \end{pmatrix},\, U'_{w_2} = \begin{pmatrix}
        1 & 0 & 0 & 0 \\
        0 & \!-\!1 & 0 & 0 \\
        0 & 0 & 0.7 & -0.7 \\
        0 & 0 & 0.7 & 0.7 \\
    \end{pmatrix},\, \ket{\psi_1} = \begin{pmatrix}
        0.7 \\
        0 \\
        0.7 \\
        0 \\
    \end{pmatrix}.
\end{equation*}

The constraint in \Cref{eqn:unitary-without-dc} allows the optimization algorithm to modify the functionality and simplify the circuit more aggressively than \Cref{fig:optimized-conventional}, which reduces the gate count. 

Nevertheless, the optimization in \Cref{fig:satisfiability-dc} is again conservative because preserving the equivalency of intermediate states is also unnecessary. An even more aggressive optimization is revealed in \Cref{fig:observability-dc}.  Since a circuit $U'$ is feasible if the outcome of the overall circuit is the target state $\ket{\psi}$, i.e., $U\ket{0} = U'\ket{0} = \ket{\psi}$. Therefore, this algorithm allows altering the intermediate state from $\ket{\psi_2}$ to $\ket{\psi'_2}$. This modification facilitates a significantly simplified circuit configuration to prepare $\ket{\psi}$ employing only $2$ CNOT gates. 
\begin{table}[b]
\vspace{-2mm}
\caption[]{Error rate compared to the ideal value, i.e., $|c_x|^2$, when running on the quantum device \texttt{ibm-osaka}. We display the actual probability and the absolute error (in parenthesis). Optimizing the QSP circuit using don't cares results in better state preparation fidelity\footnotemark.}
\label{tab:qiskit-experiment}
\centering
\begin{tabular}{c|c|c|c|c}
\hline
 & Ideal          & Initial      & Opt. w/o DC  & Opt. with DC      \\
\hline
$\ket{000}$   & 0.250  &  0.233 (1.7\%) & 0.226 (2.4\%) & 0.265 (1.5\%) \\
$\ket{001}$   & 0.125  &  0.115 (0.9\%) & 0.116 (0.8\%) & 0.119 (0.5\%) \\
$\ket{010}$   & 0.125  &  0.093 (3.2\%) & 0.223 (9.8\%) & 0.105 (2.0\%) \\
$\ket{011}$   & 0.000  &  0.039 (3.8\%) & 0.022 (2.2\%) & 0.007 (0.7\%) \\
$\ket{100}$   & 0.000  &  0.023 (2.3\%) & 0.013 (1.3\%) & 0.004 (0.3\%) \\
$\ket{101}$   & 0.125  &  0.230 (10.\%) & 0.074 (5.0\%) & 0.146 (2.0\%) \\
$\ket{110}$   & 0.125  &  0.061 (6.4\%) & 0.167 (4.2\%) & 0.130 (0.4\%) \\
$\ket{111}$   & 0.250  &  0.206 (4.4\%) & 0.158 (9.2\%) & 0.224 (2.5\%) \\
\hline
\multicolumn{2}{c|}{Avg. error} & 4.18\%        & 4.40\%        & 1.29\%         \\
\hline
\end{tabular}
\vspace{-2mm}
\end{table}

\Cref{tab:qiskit-experiment} shows the performance on a real quantum computer of the initial circuit in \Cref{fig:optimized-conventional}, circuit optimized by Qiskit~\cite{Qiskit}, and circuit in \Cref{fig:observability-dc}. Results show that employing don't-care conditions allows the optimized circuit to prepare the same state with higher fidelity and reduce the average error compared to the ideal state from $4.18\%$ to $1.29\%$. 

\footnotetext{We transpile all three circuits using $\texttt{qiskit}$ with an optimization level of $3$ and run $4096$ shots for each experiment.}

\section{Methodology}\label{sec:methodology}
As established in \Cref{sec:motivating-example}, allowing modifications in the unitary matrices significantly enhances the optimization flexibility of the QSP circuit. However, this introduced flexibility also expands the solution space, potentially increasing the complexity if not managed properly. The main contribution of this paper is the development of a scalable algorithm that exploits these don't care conditions effectively while keeping the complexity under control.

Our quantum circuit optimization workflow is based on the peephole optimization, as presented in Algorithm~\ref{alg:dc-resyn}. Starting with an initial QSP circuit $\mathcal{G}$, we sequentially traverse the gates and extract segments that target the same qubit. For each extracted segment, denoted as $\mathcal{W}$, we perform \emph{resynthesis} that searches for a replacement circuit $\mathcal{W}'$ that transitions the quantum states identically while requiring fewer gates. The optimized circuit is then produced by aggregating these improvements across all segments. 

In the rest of this section, we will explain the three main components of our algorithm: segment extraction, rotation table representation, don't care condition propagation, and segment resynthesis. 

\subsection{Single-Target Segment Extraction}\label{subsec:single-target-window}

Segment extraction is a widely used technique to simplify the complexity of circuit optimization. By focusing the optimization process on a localized segment of the quantum circuit, we can represent the unitary functions more efficiently and significantly reduce the problem size. However, local optimization lacks the comprehensive visibility of feasible solutions, resulting in missed optimization opportunities. Therefore, it is crucial for a partitioning algorithm to strike a balance between efficiency and effectiveness.
\begin{algorithm}[bt]
\caption{QSP Circuit Resynthesis}\label{alg:dc-resyn}
\SetKwInOut{Input}{input}
\SetKwInOut{Output}{output}

\DontPrintSemicolon
\Input{An initial QSP circuit $\mathcal{G}$ comprises a sequence of quantum gates, $U_1$, $U_2$, ..., $U_l$, where $U_i\in \libsmall$.}
\Output{An optimized circuit $\mathcal{G}'$.}
$W \gets \varnothing$ \;
$\ket{\varphi_t} \gets \ket{0}$ \;
\For{$i = 1, 2, ..., l$} {
    $\ket{\varphi_s} \gets \ket{\varphi_t}$ \;
    $U_{ext} \gets \mathbbm{1}$ \;
    $\mathcal{W} \gets \varnothing$ \;
    \For{$j = i, ..., l$} {
        \If{$U_{j}.\text{target} = U_{i}.\text{target}$ and $\text{isCommute}(U_{j}, U_{ext})$} {
            $\mathcal{W}.\text{push}(U_{j})$ \;
            $\ket{\varphi_t} = U_j\ket{\varphi_t}$ \;
        }
        \Else {
            \If {$U_{j}.\text{control} = U_i.\text{target}$} {
                \textbf{break} \;
            }
            $U_{ext} \gets U_{j}U_{ext}$
        }
    }
    $W$.push($\mathcal{W}$)\;
}
propagateDontCares($W$) \;
$\mathcal{G}' \gets \varnothing$ \;
\For{$\mathcal{W} \in W$} {
    $\mathcal{W}' \gets \text{resynthesis}(\mathcal{W}, \ket{\varphi_s}, \ket{\varphi_t})$ \;
    $\mathcal{G}' \gets \mathcal{G}' \cup \mathcal{W}'$ \;
}
\Return $\mathcal{G}'$.\;
\end{algorithm}

Our algorithm extracts single-target segments. A segment is \emph{single-target} if all CNOTs and single-qubit gates in it target the same qubit. Let $U_w$ be the unitary of the segment $\mathcal{W}$, then $\mathcal{W}$ is single-target if and only if $U_w$ can be represented as a multi-controlled single-target operator
\begin{equation}\label{eqn:single-target}
    U_w = \sum_{x\in \{0,1\}^{n-1}}U_{w, x}\otimes \ket{x}\bra{x},
\end{equation}
where $U_x\in \mathcal{U}(2)$ is unitary matrix that indicates the effect $\mathcal{W}$ casts on the target qubit if the other $n-1$ qubits are in state $\ket{x}$. 


To extract single-target segments, we first group adjacent gates with common targets in the original circuit sequence and then extend the boundary of each segment utilizing commute operators. As shown in Algorithm~\ref{alg:dc-resyn} from lines 8 to 14, we traverse all the gates $U_j$ sequentially after $i$ and collect as many gates with the common target ($U_i$.target) to $\mathcal{W}$ as possible. To achieve this, we keep track of the operator $U_{ext}$, which represents the traversed unitaries with different target qubits and excluded from $\mathcal{W}$. A gate $U_j$ with the same target qubit can be reordered before $U_{ext}$ and inserted into the segment $\mathcal{W}$ if $U_j$ and $U_{ext}$ commute. Therefore, by allowing the reordering of commutable operators in the sequence, we can gather more gates into each single-target segment, thereby increasing the potential for optimizing gate count. 

Observe that we can terminate the check for commute operators if the target qubit $t=U_i.target$ serves as a control qubit in $U_{ext}$ (line \tocheck{12}). In this case, $U_{ext}$ is not likely to commute with any $U_{j}$ that targets $t$, and we exit the loop to accelerate the algorithm.

\subsection{Rotation Table Representation}\label{subsec:rotation-table}
The advantage of single-target segment partitioning is that the local unitaries can be expressed efficiently. Instead of complex matrix multiplications, the functionality can be represented using additions and subtraction of rotation angles. This simplification will noticeably enhance the efficiency of simulating state transitions in the segment and facilitates the implementation of resynthesis algorithms. 

\begin{theorem}\label{theorem-single-target}
    Given an arbitrary state $\psi_s$ with real amplitudes and a single-target segment $\mathcal{W}$ comprises real unitaries from $\{\mathcal{U}(2)$, CNOT$\}$, there exists a unitary operator $U'_w$, 
    \begin{equation}
        U'_w = \sum_{x\in \{0,1\}^{n-1}}R_y(\theta_{w,x})\otimes \ket{x}\bra{x},
    \end{equation}
    where $-2\pi \leq \theta_{w,x} < 2\pi$, such that $U'_w\ket{\psi_s} = U_w\ket{\psi_s}$. 
\end{theorem}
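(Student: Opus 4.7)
The plan is to leverage the fact that the theorem only requires $U'_w$ to match $U_w$ on the specific state $\ket{\psi_s}$, not to preserve the full unitary. First, I would invoke \Cref{eqn:single-target} to write $U_w = \sum_{x \in \{0,1\}^{n-1}} U_{w,x} \otimes \ket{x}\bra{x}$, where each $U_{w,x} \in \mathcal{U}(2)$ is the effective transformation on the target qubit conditional on the control register being in $\ket{x}$. Since every gate in $\mathcal{W}$ is real-valued, each $U_{w,x}$ is in fact a real $2\times 2$ orthogonal matrix, obtained as a product of real single-qubit unitaries together with (depending on the control pattern) copies of $\sigma_x$ contributed by CNOT gates targeting this qubit.

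Next, I would decompose the input state along the target/control split as
\begin{equation*}
\ket{\psi_s} = \sum_{x \in \{0,1\}^{n-1}} v_x \otimes \ket{x}, \qquad v_x = \alpha_{0,x}\ket{0} + \alpha_{1,x}\ket{1},
\end{equation*}
with $\alpha_{0,x}, \alpha_{1,x} \in \mathbb{R}$ by hypothesis. Because the $\{\ket{x}\}$ are orthonormal, the requirement $U'_w\ket{\psi_s} = U_w\ket{\psi_s}$ decouples over $x$ and reduces to a pointwise problem in $\mathbb{R}^2$: for each $x$, exhibit an angle $\theta_{w,x} \in [-2\pi, 2\pi)$ satisfying $R_y(\theta_{w,x})\, v_x = U_{w,x}\, v_x$.

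For each fixed $x$, if $v_x = 0$ the equation holds for any angle. Otherwise, orthogonality of $U_{w,x}$ gives $\|U_{w,x} v_x\| = \|v_x\| \neq 0$, so both sides lie on the same circle in $\mathbb{R}^2$. Any two real $2$-vectors of equal nonzero length are related by a unique planar rotation $R(\phi)$ with $\phi \in [-\pi, \pi)$, and setting $\theta_{w,x} = 2\phi$ identifies $R(\phi)$ with $R_y(\theta_{w,x})$ while keeping $\theta_{w,x} \in [-2\pi, 2\pi)$. Collecting these blocks through the direct sum over $x$ produces the desired $U'_w$.

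The point that deserves the most care, and that I expect to be the main conceptual obstacle, is that $U_{w,x}$ need not itself lie in $\mathrm{SO}(2)$: if an odd number of $\sigma_x$ factors land on the target qubit for configuration $x$, then $\det U_{w,x} = -1$ and $U_{w,x}$ is a reflection rather than a rotation, so no $R_y(\theta)$ can equal it as a matrix. The essential observation, and the reason the don't-care framing is fruitful here, is that we only require $R_y(\theta_{w,x})$ to agree with $U_{w,x}$ on the single vector $v_x$; a reflection and a rotation can certainly agree on a single image, by choosing the rotation that carries $v_x$ to $U_{w,x} v_x$. The remaining action of $U_{w,x}$ on the orthogonal complement of $v_x$ is precisely a don't care for the outcome on $\ket{\psi_s}$, which is exactly the flexibility the theorem is designed to extract.
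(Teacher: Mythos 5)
Your proof is correct, but it takes a genuinely different route from the paper's. The paper argues by induction on the number of gates in $\mathcal{W}$: the base case replaces a single $\mathcal{U}(2)$ or CNOT by a controlled family of $R_y$ gates, computing each angle explicitly as $\theta_x = 2\,\mathrm{atan2}(\beta_{t,x},\alpha_{t,x}) - 2\,\mathrm{atan2}(\beta_{s,x},\alpha_{s,x})$ from the known real input and output states of the target qubit, and the inductive step splits $\mathcal{W}$ into two subsegments and composes their replacements using $R_y(\theta_1)R_y(\theta_2)=R_y(\theta_1+\theta_2)$ blockwise. You instead argue directly on the whole segment: you use the block form $U_w = \sum_x U_{w,x}\otimes\ket{x}\bra{x}$, observe that each $U_{w,x}$ is real orthogonal, decouple the constraint $U'_w\ket{\psi_s}=U_w\ket{\psi_s}$ over the control basis, and for each $x$ find a planar rotation carrying $v_x$ to $U_{w,x}v_x$ (trivial when $v_x=0$). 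Both arguments hinge on the same elementary fact --- a norm-preserving map between two real single-qubit states is realized by some $R_y$ --- but your version is shorter, avoids the paper's somewhat informal induction bookkeeping (it never states that the intermediate state fed to the second subsegment is again real, which the hypothesis requires), and makes explicit the $\det U_{w,x}=-1$ reflection case that the paper's $\mathrm{atan2}$ formula handles only implicitly. What the paper's induction buys in exchange is the additive composition law for rotation angles, which it reuses directly when setting up the resynthesis equality system in \Cref{eqn:rotation-angle-constraints}; your blockwise argument would need that observation separately.
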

\begin{proof}
    We give a proof based on induction. Assume this property holds true for all segments with no more than $k$ gates, $k\geq 1$. Consider a single-target segment $\mathcal{W}$ with $k+1$ gates. We can partition it into two single-target segment $\mathcal{W}_1$ and $\mathcal{W}_2$ with gate counts $k_1, k_2 \leq k$, such that there exists:
    \begin{equation}
\arraycolsep=1.pt\def\arraystretch{1}
    \begin{array}{rl}
         U_{w}\ket{\psi_s} &= U_{w_1}U_{w_2}\ket{\psi_s} \\
         &= (\sum_{x}R_{y}(\theta_{w_1,x})\otimes \ket{x}\bra{x})(\sum_{x}R_y(\theta_{w_2,x})\otimes \ket{x}\bra{x})\ket{\psi_s} \\
         &= (\sum_{x}R_y(\theta_{w_1,x}+\theta_{w_2,x})\otimes \ket{x}\bra{x})\ket{\psi_s},
    \end{array}
    \end{equation}
    which implies that the property also applies to segments with $k+1$ gates. 

    To complete the proof, it suffices to show that a single $\mathcal{U}(2)$ or CNOT can be replaced by an $R_y$ operator. It holds true because the initial and final state, $\ket{\psi_s}$ and $U_{\mathcal{W}}\ket{\psi_s}$ are given. Let $\alpha_{s,x}\ket{0} + \beta_{s,x}\ket{1}$ and $\alpha_{t,x}\ket{0} + \beta_{t,x}\ket{1}$ be the initial and final states of the target qubit for a control state $\ket{x}$. Then this transition can be accomplished by a Y rotation of $\theta_x$, 
    \begin{equation}
        \theta_x = 2\cdot\text{atan2}(\beta_{t,x},\alpha_{t,x}) - 2\cdot\text{atan2}(\beta_{s,x},\alpha_{s,x}),
    \end{equation}
    where $\text{atan2}$ is applied instead of $\arctan$ to determine the correct quadrant of the angle. Finally, we can find $\theta_x$ to replace $\mathcal{U}(2)$ and CNOTs given the initial state $\psi_s$. 
\end{proof}

We use \emph{rotation tables} to indicate the mapping between the control qubit states, $\ket{x}$, and the corresponding rotation angles, $\theta_x$. This mapping is formalized as a function $\varphi:\{0,1\}^{n-1}\rightarrow [-2\pi,2\pi)$. 
\Cref{tab:rotation-table} displays the rotation table of states listed in \Cref{tab:intermediate-states} with respect to qubit $q_2$. The first column represents the states of the control qubits, and the entries are rotation angles ranging from $-2\pi$ to $2\pi$. For example, the angle of $\ket{10*}$ in $\ket{\psi_5}$ corresponds to two terms, $\ket{100}$ and $\ket{101}$, with coefficients $\alpha=-\sqrt{\frac{1}{8}}$ and $\beta=\sqrt{\frac{1}{8}}$. Therefore, the corresponding rotation angle is $2\cdot\text{atan2}(\sqrt{\frac{1}{8}}, -\sqrt{\frac{1}{8}}) = \frac{3\pi}{2}$. 
\begin{table}[h]
\normalsize
    \centering

    \caption{Rotation tables of $q_2$ correspond to the state transition from $\ket{\psi_2'}$ to $\ket{\psi_5}$ in \Cref{fig:observability-dc}. $\ket{\psi_2'}\!=\!\sqrt{\frac{1}{4}}\ket{000}\!+\!\sqrt{\frac{1}{4}}\ket{010}\!+\!\sqrt{\frac{1}{4}}\ket{100}\!+\!\sqrt{\frac{1}{4}}\ket{110}$ and $\ket{\psi_5}\!=\!\sqrt{\frac{2}{8}}\ket{000}\!-\!\sqrt{\frac{1}{8}}\ket{100}\!+\!\sqrt{\frac{1}{8}}\ket{010}\!+\!\sqrt{\frac{1}{8}}\ket{101}\!+\!\sqrt{\frac{1}{8}}\ket{011}\!+\!\sqrt{\frac{2}{8}}\ket{111}$.}\label{tab:rotation-table}
\vspace{-2mm}
    \begin{tabular}{l|ccc}
    \hline
        Control $\ket{x}$ & $\ket{\psi'_2}$ & $\ket{\psi_5}$ & $\theta_x$\\
    \hline
        $\ket{00*}$ & $0$  & $0$      & $0$      \\
        $\ket{01*}$ & $0$  & $\pi/2$  & $\pi/2$  \\
        $\ket{10*}$ & $0$  & $3\pi/2$ & $3\pi/2$ \\
        $\ket{11*}$ & $0$  & $\pi$    & $\pi$    \\
    \hline
    \end{tabular}
\vspace{-3mm}
\end{table}


\subsection{Don't Care Condition Propagation}\label{subsec:dont-care-propagation}
After partitioning the circuit into segments, line 16 in Algorithm~\ref{alg:dc-resyn} identifies the conditions within each segment where modifying its functionality does not impact the final outcome. The derivation and exploitation of these \emph{don't-care} conditions are well-established in the field of logic synthesis, where they significantly contribute to the simplification of Boolean logic circuits~\cite{savoj1991use, bartlett1988multi, damiani1990observability}. This paper adheres to the established terminologies in logic synthesis, extends the concept of don't-cares, and applies Boolean methods to optimize quantum circuits.  

All the don't care conditions utilized in this paper originate from the incompletely specified mappings between basis states. Let $V=\{v_1, ..., v_{2^n}\}$ and $W=\{w_1, ..., w_{2^n}\}$ be two sets of orthonormal basis of $\mathcal{H}^{\otimes n}$. Given $m$ specified mappings, $m \in \{1, ..., 2^n\}$, the circuit unitary can be expressed as:  
\begin{equation}\label{eqn:general-exdc}
    U = \sum_{i\leq m}\ket{w_i}\bra{v_i} + \sum_{i> m}\ket{w_{f(i)}}\bra{v_i},
\end{equation}
where $f$ randomly permutes the mappings of the rest $2^n-m$ vectors between $V$ and $W$. Szasz et al. demonstrate that this formulation generalizes the flexibilities in various quantum applications, including Hamiltonian simulation circuit synthesis and quantum channel preparation~\cite{szasz2023numerical}. 

We define \emph{external don't cares} of operator $U$ as the set of vectors, $\{v_{m+1}, ..., v_{2^n}\}$, with unspecified final states after applying $U$, and we define the corresponding \emph{care set} as the set $C_{\textit{ext}} = \{v_1, ..., v_m\}$. Since the QSP circuit usually transitions the quantum states from the ground state $\ket{0}$ to the target state $\ket{\psi}$, the external care set $C_{\textit{ext}} = \{\ket{0}\}$ contains only the ground state. All other $2^n-1$ basis vectors $x\in\{0,1\}^n \setminus C_{\textit{ext}}$ are considered external don't cares. 

To utilize don't cares in our scalable workflow, we need to propagate external don't cares to the single-target segments. According to their properties, we categorized the don't-cares of a segment into \emph{controllability don't cares}~(CDC) and \emph{observability don't cares}~(ODC).

\begin{table}[b]
\normalsize
    \centering
    \caption{Rotation tables of $q_1$ corresponds to the state transitions from $\ket{\psi_1}$ to $\ket{\psi_2}$ and $\ket{\psi'_2}$ in \Cref{tab:intermediate-states}. ``X'' represents controllability don't cares and ``$X^*$'' is the observability don't care}\label{tab:rotation-table-q1}
\vspace{-3mm}
    \begin{tabular}{l|ccc}
    \hline
        Control state & $\ket{\psi_1}$ & $\ket{\psi_2}$ & $\ket{\psi'_2}$  \\
    \hline
        $\ket{0*0}$ & $0$ & $0$       & $X^*$     \\
        $\ket{0*1}$ & $X$ & $X$       & $X$     \\
        $\ket{1*0}$ & $0$ & $\pi/2$   & $\pi/2$   \\
        $\ket{1*1}$ & $X$ & $X$       & $X$     \\
    \hline
    \end{tabular}
\vspace{-3mm}
\end{table}

\subsubsection{Controllability Don't Cares (CDC)}
Controllability don't care is the set of states that will not occur (i.e., have zero amplitude) when entering a segment $\mathcal{W}$. Consequently, the behavior of the target qubits corresponding to non-existent control states does not affect the outcome. Let $\psi_s$ represent the initial state of the segment $\mathcal{W}$, and $U_s$ be the unitary of the circuits prior to $\mathcal{W}$, $\ket{\psi_s} = U_s\ket{0}$. Then, the care set of function $U_{\mathcal{W}}$ is:
\begin{equation}\label{eqn:care-set-cdc}
    C_{\mathcal{W}} = \left\{\ket{x}\otimes \left(\cos(\frac{\varphi_{s,x}}{2})\ket{0}+\sin(\frac{\varphi_{s,x}}{2})\ket{1}\right):x\in S(\psi_s)\right\},    
\end{equation}
where $\varphi_{s,x}$ represents the rotation angle corresponds to control state $\ket{x}$ and $S(\psi_s)$ is the index set of $\psi_s$ that consists of the basis vectors with non-zero amplitudes. 

CDC can be efficiently captured by simulating the state transitions from the input ground state on the fly, as displayed in lines~\tocheck{4 and 10} in Algorithm~\ref{alg:dc-resyn}. For instance, the segment $\mathcal{W}_{2}$ in \Cref{fig:optimized-conventional} targeting $q_1$ transition the states from $\psi_1$ to $\psi_2$, whose rotation tables can be derived during simulation. The control states $\ket{0\!*\!1}$ and $\ket{1\!*\!1}$ do not exist in both $\ket{\psi_1}=\sqrt{\frac{1}{2}}\ket{000}+\sqrt{\frac{1}{2}}\ket{100}$ and $\ket{\psi_2} = \sqrt{\frac{2}{4}}\ket{000}+\sqrt{\frac{1}{4}}\ket{100}+\sqrt{\frac{1}{4}}\ket{110}$, thus, are CDC. The care set, according to \Cref{eqn:care-set-cdc} is $\{\ket{000}, \ket{100}\}$ and has a cardinality $2$ out of all $2^3=8$ basis vectors, as shown in \Cref{tab:rotation-table-q1}. 

Note that the cardinality of the care set $C_{\mathcal{W}}$ is upper-bounded:
\begin{equation*}
    |C_{\mathcal{W}}| \leq \min(2^{n-1}, |S(\psi_s)|),
\end{equation*}
where $n$ is the number of qubits and $|S(\psi_s)|$ represents the cardinality of the input state. This is because we need to specify at most one mapping for each control state $\ket{x}$, $x\leq 2^{n-1}$. If the control state $\ket{x}$ does not exist in $S(\psi_s)$, the outcome vector is in the don't care set. 

\subsubsection{Observability Don't Cares (ODC)} 
Observability don't care is the set of control states whose output state that does not affect the final outcome of the QSP circuit. Let $\psi_i$ be the final state of a segment $\mathcal{W}_i$ targets $q_i$, $U_{j}$ be the initial unitary of the next segment $\mathcal{W}_{j}$ targets $q_j$, $j \neq i$. Then the rotation angle of a control state $\ket{x}\otimes \ket{*}_i$ is in $\mathcal{W}_i$'s ODC if the projections of $U_j$ on $\ket{x}\otimes \ket{0}_i$ and $\ket{x}\otimes \ket{1}_i$ are the same, i.e.:
\begin{equation}
    \bra{0}_i\bra{x}U_{j}\ket{x}\ket{0}_i = \bra{1}_i\bra{x}U_{j}\ket{x}\ket{1}_i.
\end{equation}

Unlike CDC, which are derived from the predecessors of a segment, observability don't cares are generated and propagated from the opposite direction, from the successors. For example, the rotation angle for the control state $\ket{0\!*\!0}$ in \Cref{tab:rotation-table-q1} is an observability don't care of $U_{w_2}$ generated by $\mathcal{W}_3$ from \Cref{fig:optimized-conventional}.
Note that the eight gates in $\mathcal{W}_3$ compose a controlled Y rotation that is activated only when $q_0$ is $\ket{1}$ and $q_1$ is $\ket{0}$. 
Therefore, when $q_0$ is $\ket{0}$, $U_{w_3}$ does not modify $q_2$ regardless of the value of $q_1$. In other words, $\ket{000}$ and $\ket{010}$ are equivalent for $\mathcal{W}_2$. Consequently, although the control state $\ket{0\!*\!0}$ has non-zero amplitude, its rotation angle does not affect the feasibility of the circuit and, thus, is ODC. 

Exploiting ODC can further simplify the circuits in the segment. As illustrated in \Cref{tab:rotation-table-q1}, after setting the angle in the first row of $\ket{\psi'_2}$ from $0$ to $\frac{\pi}{2}$, the original segment can be replaced by one single $R_y$ gate. After traversing and optimizing the entire circuit, we derive the circuit in \Cref{fig:observability-dc} from \Cref{fig:optimized-conventional} utilizing both CDC and ODC.

It is worth mentioning that although the rotation of the segment $\mathcal{W}_3$ is not affected in the previous example, changing the angle from $0$ to $\frac{\pi}{2}$ in \Cref{tab:rotation-table-q1} produces unintended rotations on $q_1$. To mitigate the cost of reverting these rotations after $\mathcal{W}_j$, we only exploit observability don't care if the segment $\mathcal{W}_{j+1}$ after $\mathcal{W}_j$ targets the same qubit as $\mathcal{W}_{j-1}$, which introduced unintended rotation. In this case, we can simply adjust the rotation table of $\mathcal{W}_{j+1}$ to completely recover the functionality, which does not necessarily increase the gate count of $\mathcal{W}_{j+1}$.

\subsection{Segment Resynthesis}
As displayed in line~\tocheck{19} in Algorithm~\ref{alg:dc-resyn}, the segment resynthesis is the final step of our method that finds the replacement of the initial segment with fewer gates that accomplishes the same state transition from $\psi_s$ to $\psi_t$. This step serves as the core of our workflow, significantly decreasing gate count and optimizing the overall circuit.

The workflow of the resynthesis algorithm is illustrated in Algorithm~\ref{alg:window-resyn}. Given a target rotation table, our algorithm:
\begin{enumerate}
    \item defines a hyper-parameter $K$ as the number of CNOTs in the template and initializes it to $K=0$. 
    \item constructs an equality system to check if the rotation table can be implemented if using $K$ CNOTs.
    \item analyzes the result. If a set of solutions is found, we return the circuit with $K$ CNOTs. Otherwise, we increase $K$ by one and repeat step (2).
\end{enumerate}

\begin{algorithm}[tb]
\caption{Segment resynthesis}\label{alg:window-resyn}
\SetKwInOut{Input}{input}
\SetKwInOut{Output}{output}

 \DontPrintSemicolon
 \Input{The initial and final rotation tables $\varphi_s$ and $\varphi_t$ with $n$ control qubits.}
 \Output{A sequence of quantum operators to prepare the rotation table with $K$ CNOTs.}
 
$K \gets 0$ \;
$\text{solution} \gets \varnothing$ \;
\While{$\text{solution}$ is $\varnothing$ and $K \leq K_{\max}$} {
    \For{$\{S_k\}$ in {$\mathbb{Z}_{n}^{\otimes K}$}} {
        $M \gets $ Construct equality system ($\varphi_s$, $\varphi_t$, $\{S_k\}$)\;
        solution $\gets $ Solve($M$) \;
    }
    $K \gets K+1$ \;
}
\Return solution.\;
\end{algorithm}

All utilized variables are listed in \Cref{tab:variable-declaration-mcry-decomposition}. In the rest of this section, we will demonstrate the construction of the equality system from line~\tocheck{5} in Algorithm~\ref{alg:window-resyn}.
\begin{table}[b]
\small
    \centering
    \caption{\small Variable declaration for the segment resynthesis problem.}\label{tab:variable-declaration-mcry-decomposition}

\setlength{\tabcolsep}{2pt}
    \begin{tabular}{lll}
    \hline
    \multicolumn{3}{l}{\textbf{Input:}} \\
    $S_{k,i}$ & $\{0,1\}$ & Whether $q_i$ is the control qubit in the $k$th CNOT.\\
    $R_{k,x}$ & $\{-1,1\}$ & $R_{x,k}\!=\!-1$ if $\ket{x}$ is activated by the $k$th CNOT. \\
    $\varphi_{s,x},\varphi_{t,x}$ & $[-2\pi, 2\pi)$ & Initial and final angles corresponds to $\ket{x}$. \\
    \multicolumn{3}{l}{\textbf{Internal:}} \\
    $\varphi_{k,x}$ & $[-2\pi, 2\pi)$ & The angle corresponds to $\ket{x}$ after the $k$th CNOT. \\
    \multicolumn{3}{l}{\textbf{Output:}} \\
    $\theta_k$ & $[0, 4\pi)$ & The rotation angle of R$_y$ after the $k$th CNOT. \\
    \hline
    \end{tabular}
\end{table}

\subsubsection{Fixed-target CNOT template}

Since all the CNOT gates in the circuit target the same qubit, the control qubit alone suffices to represent each CNOT gate. We define CNOT variables, $S_{k,i}$, as binary values indicating whether $q_i$ serves as the control qubit in $k$th CNOT. Given that each CNOT has only one control qubit, $\sum_{i\in \mathbb{Z}_{n}}S_{k,i} = 1$ for all $k\in \mathbb{Z}_{K},$ where $n$ represents the number of candidate control qubits. Then, we define a set of variables $R_{k,x}$, where $R_{k,x} = -1$ if the rotation at index $\ket{x}$ is enabled by the $k$th CNOT, and $R_{k,x} = 1$ otherwise. These $R_{k,x}$ variables are derived from the given $S_{k,i}$ variables, as the control qubit $i$ and its phase determines whether an index $x$ is affected by the CNOT gate. 
\begin{figure}[hbt]
\centering
\mbox{
\small
    \Qcircuit @C=.7em @R=.5em {
\lstick{q_0} & \qw  & \qw & \qw & \ctrl{2} & \qw  & \qw \\
\lstick{q_1} & \qw & \ctrl{1} & \qw & \qw & \qw &\qw \\
\lstick{q_2} & \gate{\parbox{0.6cm}{\centering \footnotesize $R_y$\\$\theta_0$}} & \targ & \gate{\parbox{0.6cm}{\centering \footnotesize $R_y$\\$\theta_1$}} & \targ & \gate{\parbox{0.6cm}{\centering \footnotesize $R_y$\\$\theta_2$}} & \qw
    }
}
\vspace{-2mm}
\caption{CNOT template with $K=2$, $S_{1,1}=1$ and $S_{2,0}=1$.}\label{fig:cnot-template}
\end{figure}
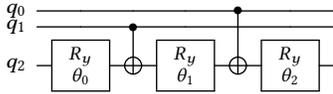

\subsubsection{Rotation angle variables and constraints}

We use $\varphi_{k,x}$ to represent the rotation angle corresponding to index $\ket{x}$ after the $k$th CNOT gate. The values of the initial and final angles, $\varphi_{0,x} = \varphi_{s,x}$ and $\varphi_{K,x} = \varphi_{t,x}$, are given by the rotation tables. Then, we define rotation angle constraints as shown in \Cref{eqn:rotation-angle-constraints}.
\begin{equation}\label{eqn:rotation-angle-constraints}
    \varphi_{k+1,x} = \frac{\pi}{2} + \theta_k + R_{k,x}\cdot(\varphi_{k,x}-\frac{\pi}{2}). 
\end{equation}
This constraint describes the evolution of the rotation table before and after the $k$th CNOT gate. If the index $\ket{x}$ is not enabled by the CNOT, then $R_{k,x} = 1$, and $\varphi_{k+1,x} = \varphi_{k,x} + \theta_k$ because the $R_y$ gate after CNOT introduces a rotation of $\theta_k$. Otherwise, if $\ket{x}$ is enabled, $R_{k,x} = -1$, then $\varphi_{k+1,x} = \pi - \varphi_{k,x} + \theta_k$ because the CNOT gate ``reflects'' the previous angle $\varphi_{k,x}$ along $\varphi = \frac{\pi}{2}$, as illustrated in the proof of Theorem~\ref{theorem-single-target}, before applying the rotation of $\theta_k$.



\begin{example}\label{example:segment-synthesis}
    Consider the segment between $\ket{\psi'_2}$ and $\ket{\psi_5}$, whose initial and final angles are provided in \Cref{tab:rotation-table}. Assume we are at the iteration where $K=2$, then we can assign the boundary values to the rotation angle variables:
    \begin{align*}
        \varphi_{0,\ket{00}} = \varphi_{0,\ket{01}} = \varphi_{0,\ket{10}} = \varphi_{0,\ket{11}} = \theta_0\,,\, \\ 
        \varphi_{2,\ket{00}} = 0,\,\varphi_{2,\ket{01}} = \frac{\pi}{2},\,\varphi_{2,\ket{10}} = \frac{3\pi}{2} ,\,\varphi_{2,\ket{11}} = \pi, 
    \end{align*}
    where $\theta_0$ is the offset introduced by the first $R_y$ gate. 

    The target is to find an assignment to $\theta_0$, $\theta_1$, $\theta_2$, $S_{1,i}$, and $S_{i,2}$, such that the equality system is satisfied. For clarity, we assume the control qubits of the two CNOTs are $S_{1,1}\!=\!1$ and $S_{2,0}\!=\!1$. This corresponds to the template in \Cref{fig:cnot-template}, where $R_{1,\ket{00}}\!=\!R_{1,\ket{01}}\!=\!R_{2,\ket{00}}\!=\!R_{2,\ket{01}}\!=\!1$ and $R_{1,\ket{10}}\!=\!R_{2,\ket{01}}=R_{1,\ket{11}}=R_{2,\ket{11}}=-1$. We can streamline the equality system by plugging in the known variables and canceling intermediate variables corresponding to the same $x$ as shown in \Cref{eqn:equality-system-example}.
    \begin{equation}\label{eqn:equality-system-example}
\setlength{\arraycolsep}{1pt}
        \left\{
        \begin{array}{rl}
            \varphi_{1,\ket{00}}&=\varphi_{0,\ket{00}} + \theta_1  \\
            \varphi_{1,\ket{01}}&=\pi - \varphi_{0,\ket{01}} + \theta_1  \\
            \varphi_{1,\ket{10}}&=\varphi_{0,\ket{10}} + \theta_1  \\
            \varphi_{1,\ket{11}}&=\pi - \varphi_{0,\ket{11}} + \theta_1  \\
            \varphi_{2,\ket{00}}&=\varphi_{1,\ket{00}} + \theta_2  \\
            \varphi_{2,\ket{01}}&=\varphi_{1,\ket{01}} + \theta_2  \\
            \varphi_{2,\ket{10}}&=\pi - \varphi_{1,\ket{10}} + \theta_2  \\
            \varphi_{2,\ket{11}}&=\pi - \varphi_{1,\ket{11}} + \theta_2  \\
        \end{array}
        \right. \rightarrow
        \left\{
        \begin{array}{rl}
            0&=\theta_0 + \theta_1 + \theta_2  \\
            \frac{\pi}{2}&=(\pi - \theta_0 + \theta_1) + \theta_2  \\
            \frac{3\pi}{2}&=\pi - (\theta_0 + \theta_1) + \theta_2  \\
            \pi &=\theta_0  - \theta_1 + \theta_2  \\
        \end{array}
        \right..
    \end{equation}

    The solution $\theta_0 = \theta_2 = \frac{\pi}{4}$, $\theta_1=-\frac{\pi}{2}$ is feasible. Therefore, we can find a circuit using $K=2$ CNOTs to implement this rotation table as seen in \Cref{fig:observability-dc}.

    We can verify $K=1$ has no feasible solution by watching the corresponding equality system in \Cref{eqn:equality-system-no-solution}. 
    \begin{equation}\label{eqn:equality-system-no-solution}
\setlength{\arraycolsep}{1pt}
        \left\{
        \begin{array}{rl}
            \varphi_{1,\ket{00}} &=\varphi_{0,\ket{00}} + \theta_1  \\
            \varphi_{1,\ket{01}} &=\varphi_{0,\ket{01}} + \theta_1  \\
            \varphi_{1,\ket{10}} &=\pi - \varphi_{0,\ket{10}} + \theta_1  \\
            \varphi_{1,\ket{11}} &=\pi - \varphi_{0,\ket{11}} + \theta_1  \\
        \end{array}
        \right. \rightarrow
        \left\{
        \begin{array}{rl}
            0&=\theta_0 + \theta_1  \\
            -\frac{\pi}{2}&=\theta_0 + \theta_1 \\
            \frac{\pi}{2}&=\pi - \theta_0 + \theta_1 \\
            \pi&=\pi - \theta_0 + \theta_1 \\
        \end{array}
        \right..
    \end{equation}

    Similarly, $K=0$ is also infeasible, and we return the 2-CNOT circuit as the optimal solution for this segment synthesis problem. 
\end{example}

\begin{figure*}[bt]
    \centering
    \begin{subfigure}[b]{.24\linewidth}
    \centering
        \includegraphics[width=.99\linewidth]{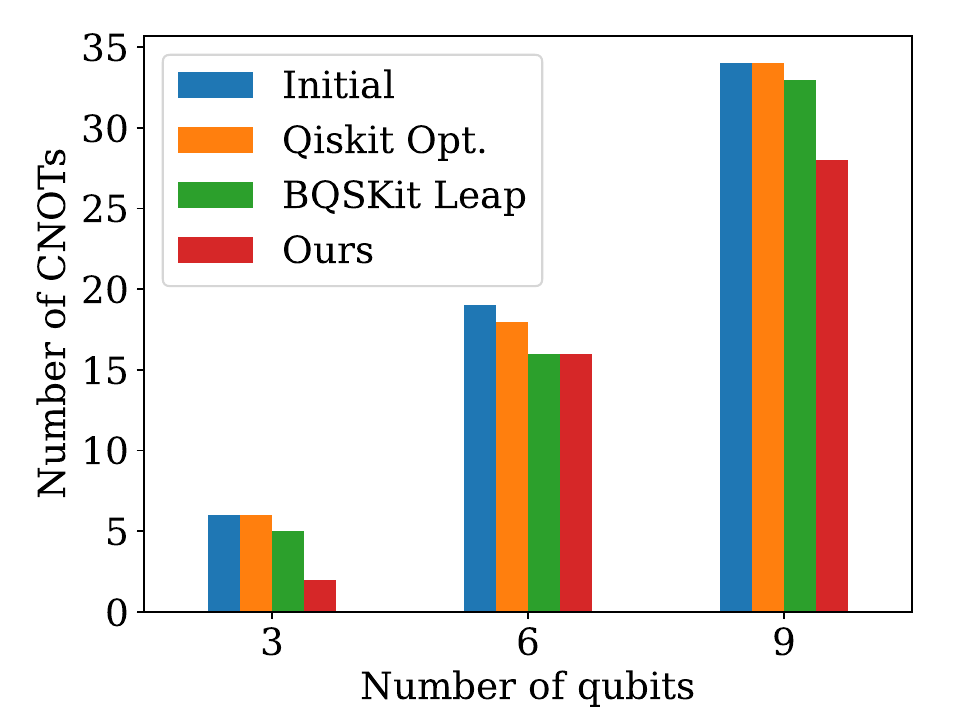}
        \caption{Random sparse uniform}\label{fig:rand-sparse-uniform}
    \end{subfigure}
    \begin{subfigure}[b]{.24\linewidth}
    \centering
        \includegraphics[width=.99\linewidth]{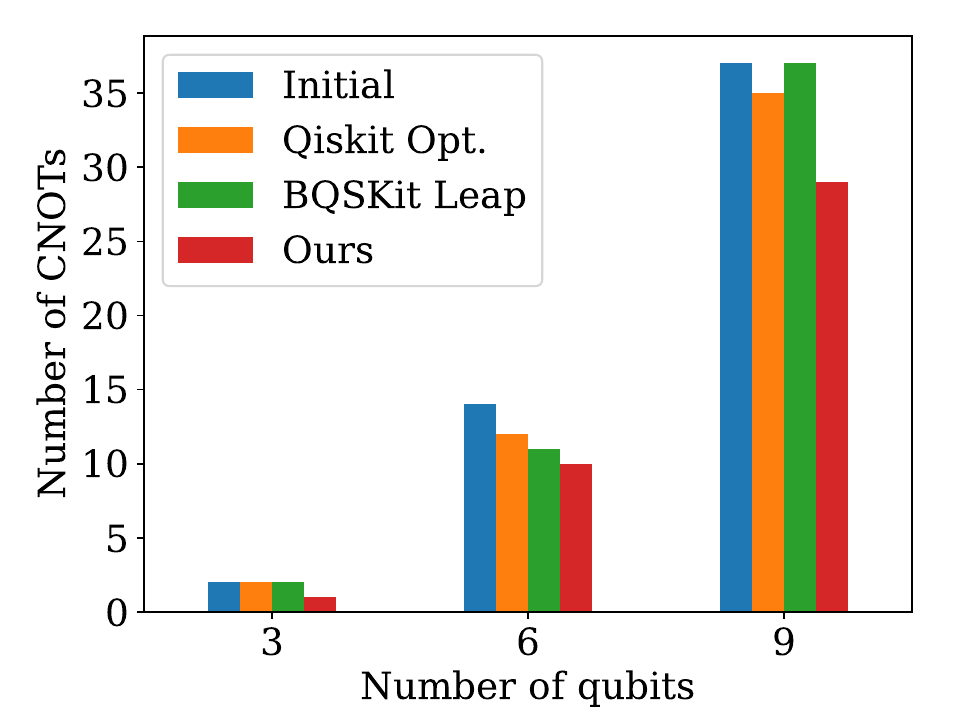}
        \caption{Random sparse}\label{fig:rand-sparse-nonuniform}
    \end{subfigure}
    \begin{subfigure}[b]{.24\linewidth}
    \centering
        \includegraphics[width=.99\linewidth]{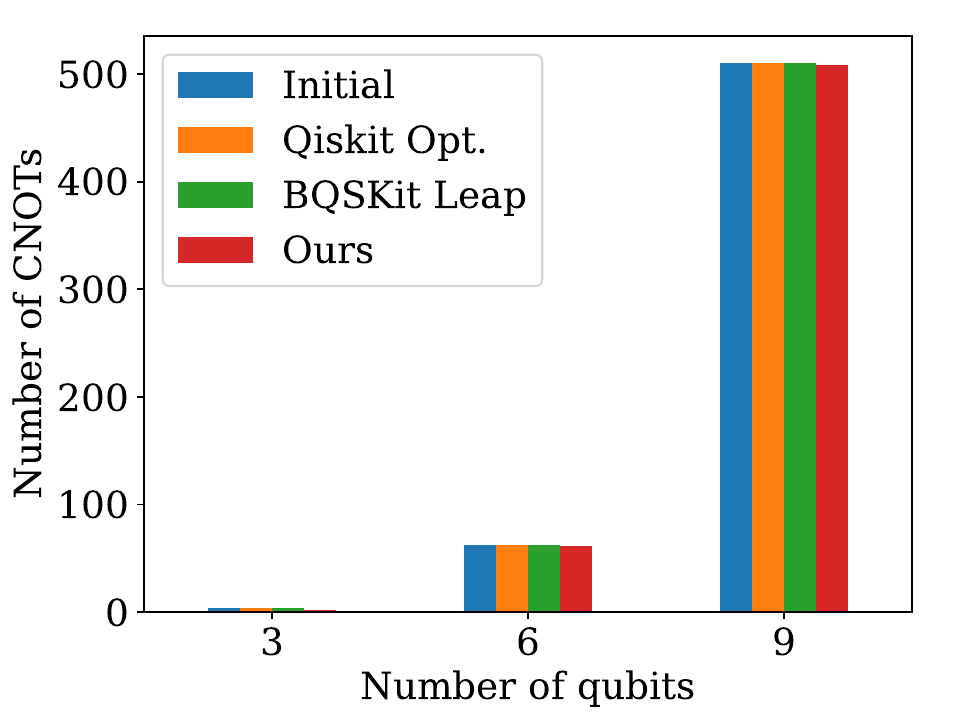}
        \caption{Random dense uniform}\label{fig:rand-dense-uniform}
    \end{subfigure}
    \begin{subfigure}[b]{.24\linewidth}
    \centering
        \includegraphics[width=.99\linewidth]{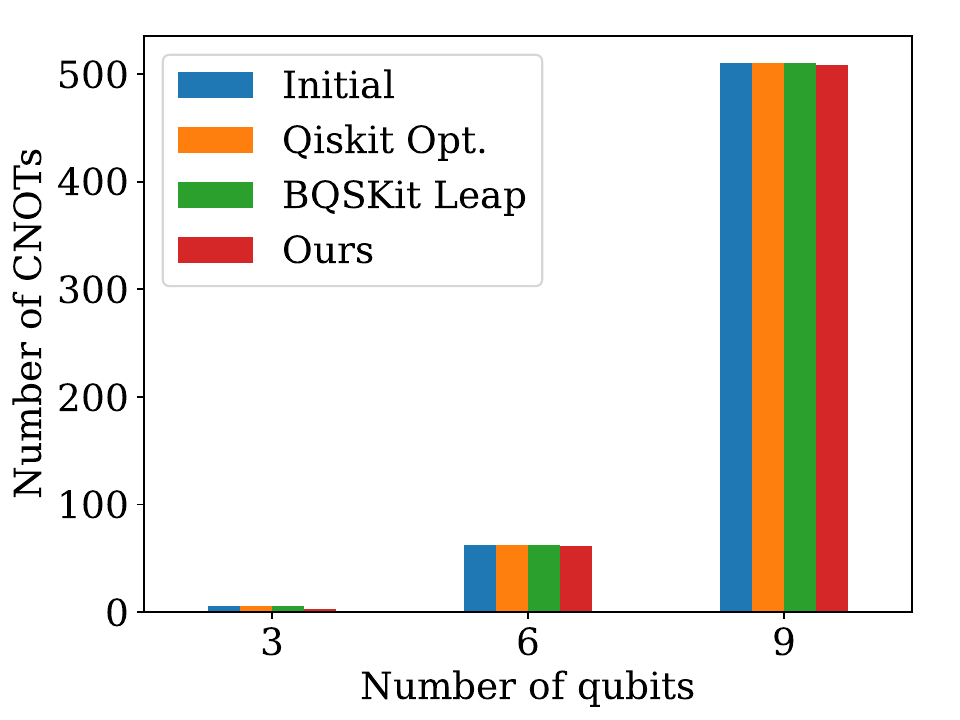}
        \caption{Random dense}\label{fig:rand-dense-nonuniform}
    \end{subfigure} \\
\vspace{.3cm}
    \begin{subfigure}[b]{.24\linewidth}
    \centering
        \includegraphics[width=.99\linewidth]{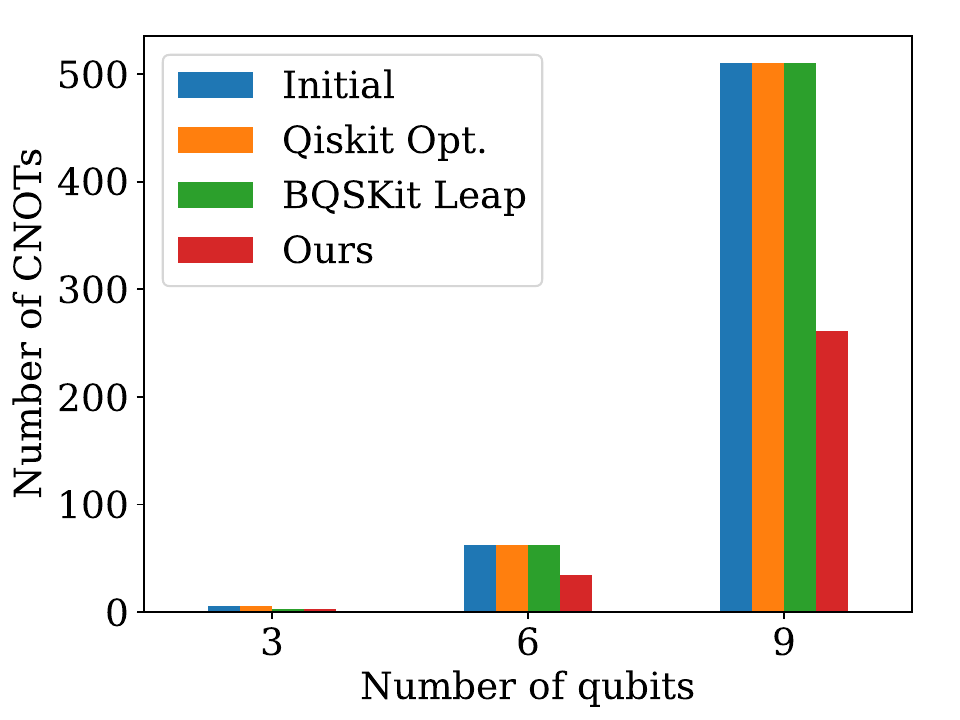}
        \caption{Dicke states $\ket{D_n^{\left \lceil{n/2}\right \rceil}}$}\label{fig:dicke}
    \end{subfigure}
    \begin{subfigure}[b]{.24\linewidth}
    \centering
        \includegraphics[width=.99\linewidth]{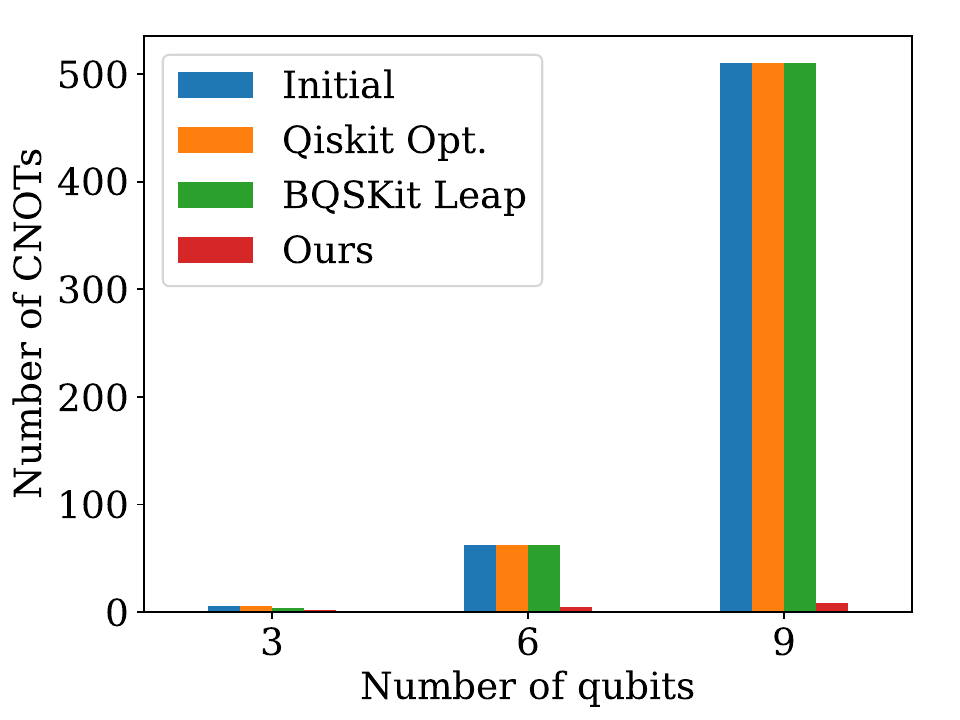}
        \caption{B states $\ket{B_n^{2^{(n-1)}+1}}$}\label{fig:qba}
    \end{subfigure}
    \begin{subfigure}[b]{.24\linewidth}
    \centering
        \includegraphics[width=.99\linewidth]{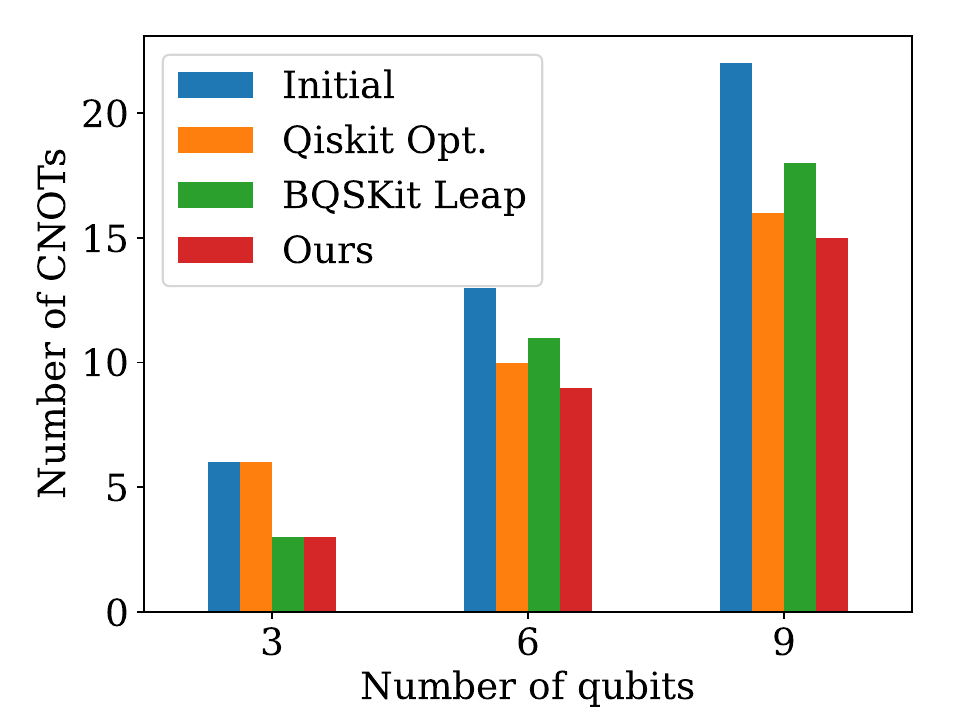}
        \caption{W states $\ket{W_n}$}\label{fig:w}
    \end{subfigure}
    \begin{subfigure}[b]{.24\linewidth}
    \centering
        \includegraphics[width=.99\linewidth]{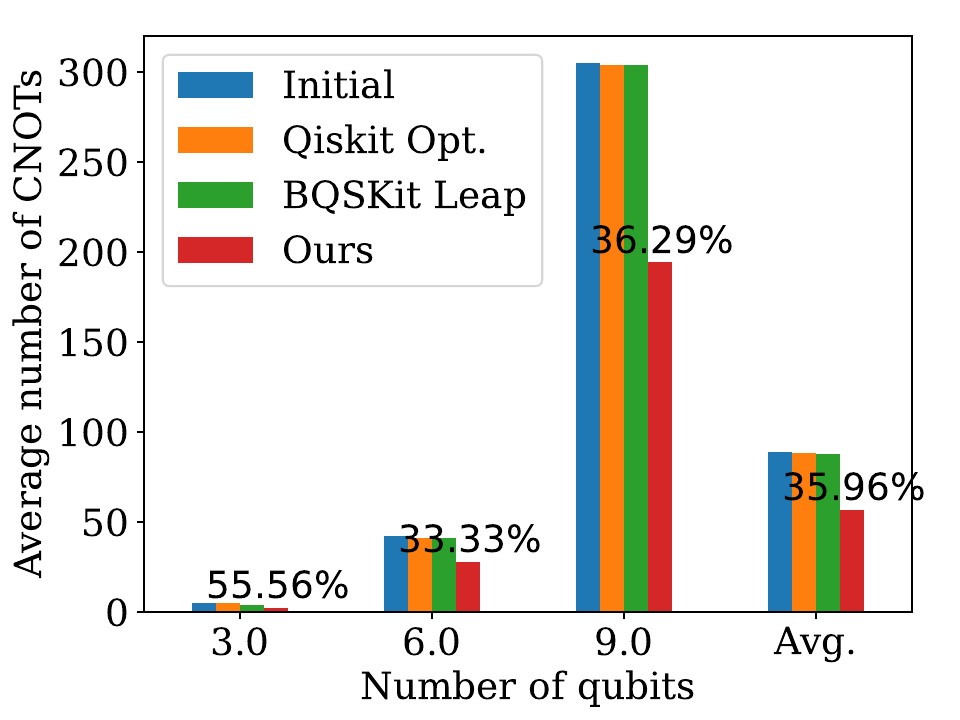}
        \caption{Average}\label{fig:avg}
    \end{subfigure}
    \caption{CNOT count comparison between three optimization algorithms.}\label{fig:cnot-comparison}
\end{figure*} 
\subsection{Complexity Analysis}
The overall resynthesis method is based on a peephole algorithm and is therefore scalable and efficient, running in polynomial time.

\begin{enumerate}
    \item Window extraction in Algorithm~\ref{alg:dc-resyn}. This step requires $\mathcal{O}(N^2)$ quantum operator simulations, where $N$ denotes the number of gates in the initial circuit.
    \item Resynthesis algorithm in Algorithm~\ref{alg:window-resyn}. For each segment, the time complexity is \tocheck{$\mathcal{O}(n^{K_{\max}}m^2K_{\max})$}, where $n$ is the number of qubits and $m$ is the cardinality of the care set in the extracted segment. This is because we need to enumerate $n^{K_{\max}}$ CNOT templates, and each template has a linear system with $K_{\max}+1$ columns and $m$ rows. We fix $K_{\max}=2$ to upper-bound the complexity to $\mathcal{O}(n^2m^2)$. 
\end{enumerate}

It is worth mentioning that we accelerate the quantum operator simulation to speed up our algorithm further. Although each operator corresponds to a unitary matrix with a dimension of $2^n$, the complexity, in practice, is lower because the implementation leverages sparsity and skips the multiplications of indices with zero amplitudes. Furthermore, all operators in the segment target the same qubits and do not affect the values of control qubits in the basis vectors. 
\section{Experimental Results}\label{sec:evaluation}
In this section, we present results to evaluate the effectiveness and efficiency of the proposed algorithm.

\subsection{Benchmarks and Workflow}
The benchmark suite used in our experiments consists of various states with different numbers of qubits, different levels of symmetry, and different sparsity. For each category of quantum states, we apply the most advantageous algorithm to generate the initial QSP circuit~\cite{mozafari2019preparation, gleinig2021efficient, wang2024quantum}. 
\begin{enumerate}
    \item $\ket{B_n^k}$ are quantum states with uniform amplitudes on the first $k+1$ basis states. $\ket{B_n^k} = \frac{1}{\sqrt{k+1}}\sum_{x=0}^k\ket{x}.$
    \item Dicke state $\ket{D_n^k}$ are highly symmetric $n$-qubit states. Indices of Dicke states have exactly $k$ ones and $n-k$ zeros~\cite{dicke1954coherence}. 
    \item W state $\ket{W_n}$ is a special case of Dicke states when $k=1$~\cite{dur2000three}.
    \item Sparse random states have cardinality $m=n$, and dense random states have $m=2^{n-1}$. 
\end{enumerate}

We implement the proposed method using Python and apply one iteration of the resynthesis algorithm introduced in \Cref{sec:methodology} to these initial QSP circuits. We compare the gate count reduction of our method with two circuit optimization methods as baselines, including the unitary synthesis pass in Qiskit (with an optimation level of 3)~\cite{Qiskit} and a powerful circuit optimization tool based on numerical methods~\cite{smith2023leap}. 

We map the optimized circuit from ours and baseline methods to $\libsmall$ and verify the correctness of the QSP circuits by running \texttt{qiskit} simulation~\cite{Qiskit}. All experiments are conducted on a computer equipped with a 3.7GHz AMD Ryzen 9 5900X processor and 64GB RAM.

Although our algorithm only applies to real unitary synthesis while baseline methods handle more general complex unitaries, we could adapt through phase oracle to prepare states with complex amplitudes. Besides, many interesting states, such as the benchmarks above, are real. Therefore, their state preparation circuits can be improved directly using our method. 

\subsection{CNOT Count Comparison}
\Cref{fig:cnot-comparison} displays the CNOT count comparison on seven categories of quantum states after running the three optimization algorithms. The average CNOT count shown in \Cref{fig:avg} demonstrates the advantages of our method in quantum circuit optimization. Leveraging the don't cares in the QSP circuit, we further lower the CNOT count of the initial circuit by 36\%. Meanwhile, the two baseline algorithms do not significantly optimize the circuit because of the constraints to completely preserve the circuit unitary. 
\begin{figure*}[t]
\centering
\mbox{
\small
    \Qcircuit @C=.4em @R=.3em {
\lstick{q_0} & \qw & \qw & \qw & \qw & \qw & \qw & \qw & \qw & \qw & \qw & \qw & \qw & \gate{\parbox{0.6cm}{\centering \footnotesize $R_y$\\$1.96$}} & \targ & \gate{\parbox{0.6cm}{\centering \footnotesize $R_y$\\$-\pi/8$}} & \targ & \gate{\parbox{0.6cm}{\centering \footnotesize $R_y$\\$-\pi/8$}} & \targ & \gate{\parbox{0.6cm}{\centering \footnotesize $R_y$\\$-\pi/8$}} & \targ & \gate{\parbox{0.6cm}{\centering \footnotesize $R_y$\\$\pi/8$}} & \targ & \gate{\parbox{0.6cm}{\centering \footnotesize $R_y$\\$\pi/8$}} & \targ & \gate{\parbox{0.6cm}{\centering \footnotesize $R_y$\\$\pi/8$}} & \targ & \gate{\parbox{0.6cm}{\centering \footnotesize $R_y$\\$-\pi/8$}} & \targ & \qw \\
\lstick{q_1} & \qw & \qw & \qw & \qw & \qw & \gate{\parbox{0.6cm}{\centering \footnotesize $R_y$\\$\pi/2$}} & \targ & \gate{\parbox{0.6cm}{\centering \footnotesize $R_y$\\$-\pi/4$}} & \targ & \gate{\parbox{0.6cm}{\centering \footnotesize $R_y$\\$\pi/4$}} & \targ & \targ & \qw & \ctrl{-1} & \qw & \qw & \qw & \ctrl{-1} & \qw & \qw & \qw & \ctrl{-1} & \qw & \qw & \qw & \ctrl{-1} & \qw & \qw & \qw \\
\lstick{q_2} & \qw & \gate{\parbox{0.6cm}{\centering \footnotesize $R_y$\\$\pi/4$}} & \targ & \gate{\parbox{0.6cm}{\centering \footnotesize $R_y$\\$\pi/4$}} & \targ & \qw & \ctrl{-1} & \qw & \qw & \qw & \ctrl{-1} & \qw & \qw & \qw & \qw & \ctrl{-2} & \qw & \qw & \qw & \qw & \qw & \qw & \qw & \ctrl{-2} & \qw & \qw & \qw & \qw & \qw \\
\lstick{q_3} & \gate{\parbox{0.6cm}{\centering \footnotesize $R_y$\\$0.68$}} & \qw & \ctrl{-1} & \qw & \ctrl{-1} & \qw & \qw & \qw & \ctrl{-2} & \qw & \qw & \ctrl{-2} & \qw & \qw & \qw & \qw & \qw & \qw & \qw & \ctrl{-3} & \qw & \qw & \qw & \qw & \qw & \qw & \qw & \ctrl{-3} & \qw
    }
}
\caption{$4$-qubit B state preparation using $14$ single-qubit gates and $14$ CNOTs.}\label{fig:qba-initial}
\vspace{-3mm}
\end{figure*}
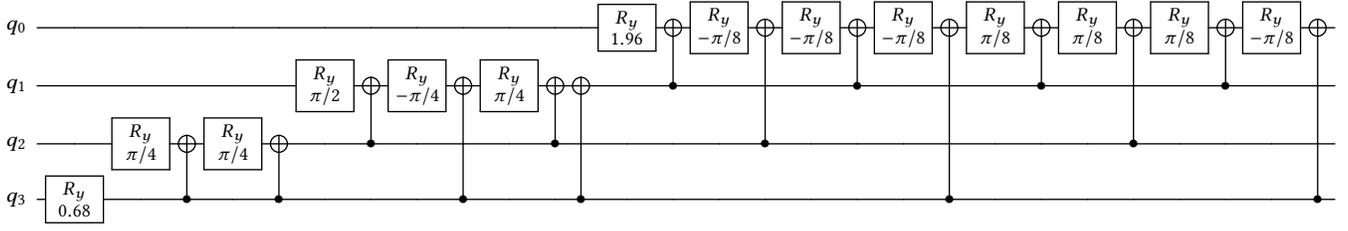
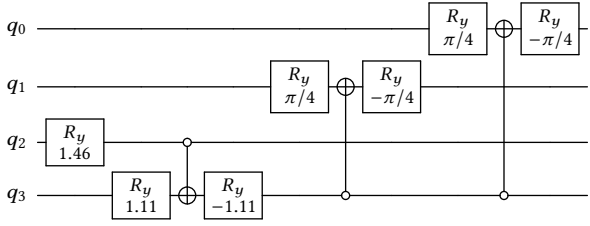
\begin{figure}[h]
\centering
\mbox{
\small
    \Qcircuit @C=.4em @R=.3em {
\lstick{q_0} & \qw & \qw & \qw & \qw & \qw & \qw & \qw & \gate{\parbox{0.6cm}{\centering \footnotesize $R_y$\\$\pi/4$}} & \targ & \gate{\parbox{0.6cm}{\centering \footnotesize $R_y$\\$-\pi/4$}} & \qw \\
\lstick{q_1} & \qw & \qw & \qw & \qw & \gate{\parbox{0.6cm}{\centering \footnotesize $R_y$\\$\pi/4$}} & \targ & \gate{\parbox{0.6cm}{\centering \footnotesize $R_y$\\$-\pi/4$}} & \qw & \qw & \qw & \qw \\
\lstick{q_2} & \gate{\parbox{0.6cm}{\centering \footnotesize $R_y$\\$1.46$}} & \qw & \ctrlo{1} & \qw & \qw & \qw & \qw & \qw & \qw & \qw & \qw \\
\lstick{q_3} & \qw & \gate{\parbox{0.6cm}{\centering \footnotesize $R_y$\\$1.11$}} & \targ & \gate{\parbox{0.6cm}{\centering \footnotesize $R_y$\\$-1.11$}} & \qw & \ctrlo{-2} & \qw & \qw & \ctrlo{-3} & \qw & \qw
    }
}
\caption{$4$-qubit B state preparation utilizing $7$ single-qubit gates and $3$ CNOTs.}\label{fig:qba-opt}
\end{figure}
\begin{figure}[b]
    \centering
    \begin{subfigure}[b]{.48\linewidth}
    \centering
        \includegraphics[width=.99\linewidth]{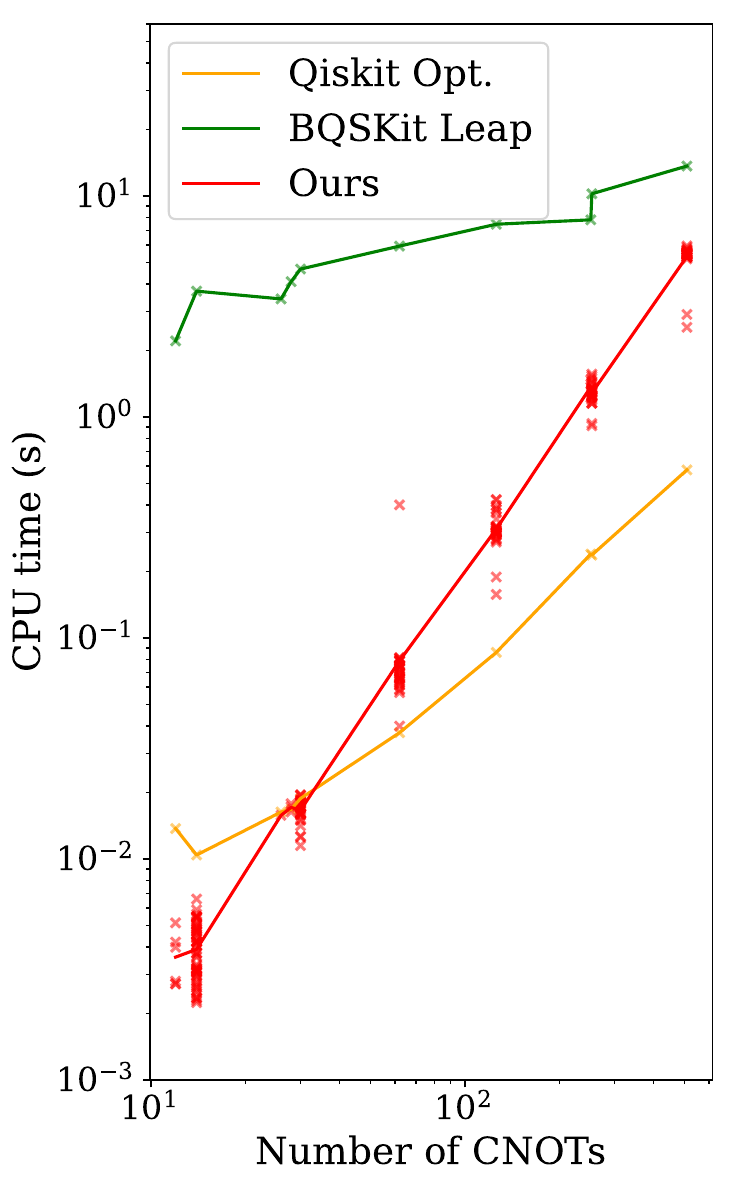}
        \caption{Dense states}\label{fig:resynthesis_time_dense}
    \end{subfigure}
    \begin{subfigure}[b]{.48\linewidth}
    \centering
        \includegraphics[width=.99\linewidth]{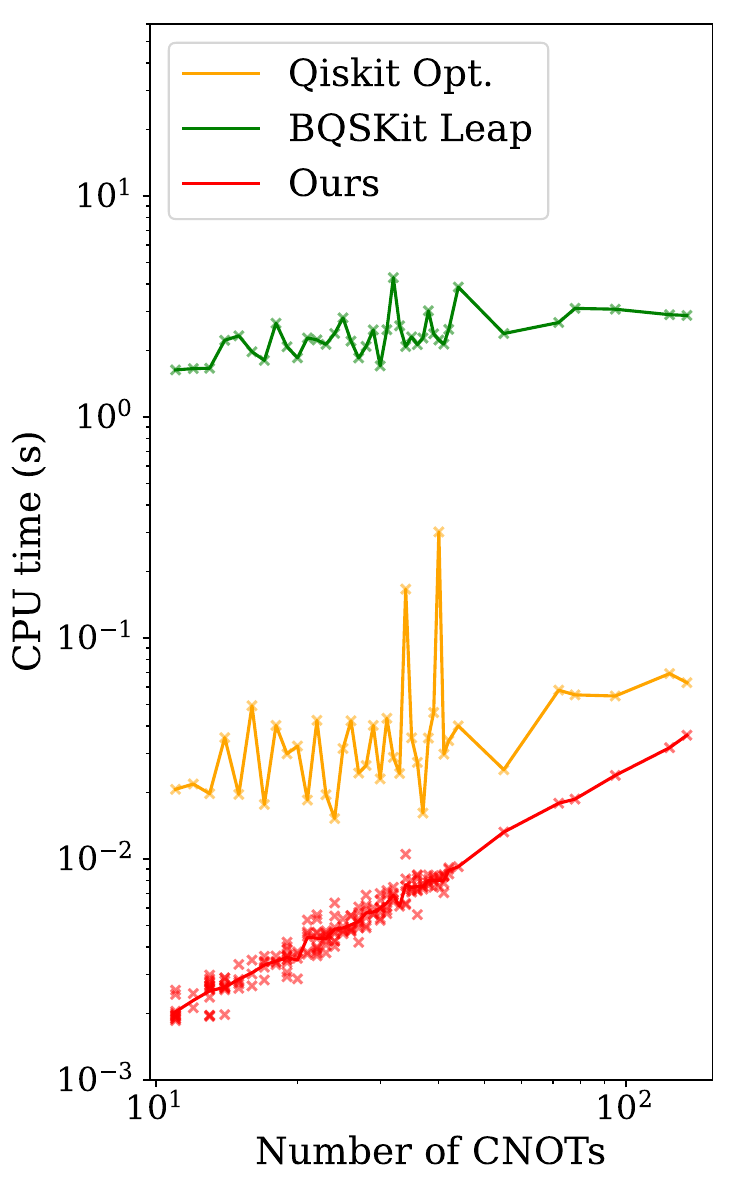}
        \caption{Sparse states}\label{fig:resynthesis_time_sparse}
    \end{subfigure}
    \caption{Scalability analysis. We demonstrate the relationship between CPU time on the $y$-axis and the number of CNOTs in the initial circuit on the $x$-axis.}\label{fig:results-scalability}
\end{figure}

In general, our algorithm is more effective on sparse states than on dense ones. This is because the number of don't cares decreases as the cardinality of the index set $m$ increases in our formulation. Note that our method simplifies the circuit by identifying an alternative unitary allowed by the don't-care conditions. Consequently, the greater the circuit's flexibility, the higher the likelihood of finding a feasible replacement that minimizes the gate count. 


A particularly interesting result is the optimization of $\ket{B_n^{2^{(n-1)}+1}}$ state preparation circuits. \Cref{fig:qba} exhibits that our resynthesis algorithm exponentially reduces the gate count required to prepare these states. To illustrate the optimization, we display the initial QSP circuit for $\ket{B_4^{9}}$ in \Cref{fig:qba-initial} and our optimized circuit in \Cref{fig:qba-opt}.

Notice that although both circuits subsequently entangle qubits into the system, the initial circuit in \Cref{fig:qba-initial} needs 2, 4, and 8 CNOTs to entangle $q_2$, $q_3$ and $q_4$, respectively, while \Cref{fig:qba-opt} uses 1 CNOT for each qubit. Since the segments to entangle these qubits are all single-target, the result confirmed that our algorithm successfully extracts the rotation tables of these segments and identifies the most efficient CNOT configuration to accomplish these rotations. 

\subsection{Scalablity Analysis}
The principal contribution of this paper is the introduction of a scalable resynthesis algorithm. This algorithm partitions the quantum circuit into single-target segments and uses the propagation of don't care conditions to derive a rotation table. By converting the complex problem of unitary synthesis, which typically involves matrix multiplications with high-order terms, into simpler linear equalities, our approach not only simplifies the process but also utilizes don't care conditions more efficiently, enhancing practical utility.

\Cref{fig:results-scalability} depicts the relationship between the CPU time required by three optimization algorithms and the number of CNOTs in the initial circuit. The runtime exhibits linear growth on a log-log plot, suggesting that our algorithm operates with polynomial complexity when limiting the value of $K_{\max}$, which is $2$ in this paper. Furthermore, our resynthesis algorithm completes circuits encompassing up to 10 qubits and several hundred CNOT gates within a few seconds, thereby proving both the efficiency and practical utility of our method.

\section{Conclusion}
In the noisy intermediate-scale quantum computing era, minimizing the count of two-qubit gates within quantum circuits is crucial. Due to the inherent complexity, existing algorithms typically assume a fixed unitary to optimize the circuit. However, modifying the unitary operator does not necessarily affect the functional correctness of the circuit in several quantum applications, and leveraging the flexibility can lead to significant circuit simplification. This paper introduces a scalable workflow that employs don't-care conditions to encode such flexibilities within the unitary matrix and develops an efficient resynthesis algorithm to optimize gate counts. Our experimental results demonstrate a significant reduction in CNOT gate usage. By permitting adjustments to the unitary during the optimization process, our method achieves an average decrease of 36\% in CNOT count when applied to the optimization of state preparation circuits.

\section*{Acknowledgements}
This work is funded by NSF grant 2313083. The authors would like to thank Marci Baun for proofreading the paper.

\newpage
\bibliographystyle{unsrt}
\bibliography{main.bib}

\end{document}